\theoremstyle{plain}  
\newtheorem{thm}{Theorem}
\newtheorem{lem}{Lemma}
\newtheorem{rmk}{Remark}
\numberwithin{equation}{section}
\begin{document}
\title{A pedagogical note on the computation of relative entropy \\ of two $n$-mode gaussian states}
\author{K. R. Parthasarathy}\email{krp@isid.ac.in}
\affiliation{Indian Statistical Institute, Theoretical Statistics and Mathematics Unit,Delhi Centre,
7 S.~J.~S. Sansanwal Marg, New Delhi 110 016, India} 
\date{\today}
\begin{abstract}	 
A formula for the relative entropy   $S(\rho\vert\vert\sigma)={\rm Tr}\,\rho\,(\log\,\rho-\log\,\sigma)$ of two  gaussian states $\rho$, $\sigma$ in the boson Fock space $\Gamma(\mathbb{C}^n)$ is presented. It is shown that the relative entropy has a classical and a quantum part: The classical part consists of a weighted linear combination of relative Shannon entropies of $n$ pairs of Bernouli trials arising from the thermal state composition of the gaussian states $\rho$ and $\sigma$. The quantum part has a sum of $n$ terms, that are functions of the  annihilation means and the covariance matrices of 1-mode marginals of the gaussian state  $\rho'$, which is  equivalent to  $\rho$ under a disentangling unitary gaussian symmetry operation of the state $\sigma$. A generalized formula for the Petz-R{\'e}nyi relative entropy $S_\alpha(\rho\vert\vert\sigma)=-\frac{1}{\alpha-1}\log{\rm Tr}\,
\rho^{\alpha}\sigma^{1-\alpha},\ 0<\alpha<1$ for gaussian states $\rho$, $\sigma$ is also presented. Furthermore it is shown that  $S_{\alpha}(\rho\vert\vert\sigma)$ converges to the limit $S(\rho\vert\vert\sigma)$ as $\alpha$ increases to 1.  

\vskip 1.5in 

\begin{center}
	{\Large\bf\em In memory of Robin~L~Hudson}
\end{center}

\end{abstract}

\maketitle
\section{Introduction}

Every gaussian state in $\Gamma(\mathbb{C}^n)$ is completely determined by its annihilation mean and covariance matrix. A finer analysis of the covariance matrix reveals that such a guassian state is composed of a product of $n$ thermal states (including the vacuum state) and an entangling unitary gaussian symmetry operator arising from the group generated by  phase translations and symplectic transformations. Every thermal state, when viewed in a product basis, gives rise to a natural Bernoulli (binomial) trial with success and failure probabilities corresponding to {\em detection} or {\em no-detection} of a particle. 

In this paper we present a formula for the relative entropy $S(\rho\vert\vert\sigma)={\rm Tr}\,\rho\,(\log\,\rho-\log\,\sigma)$ of two gaussian states $\rho,\ \sigma$ in $\Gamma(\mathbb{C}^n)$. It is shown that the relative entropy $S(\rho\vert\vert\sigma)$ contains two distinguished components: the first part is a  weighted linear combination of classical Shannon relative entropies $H(p_{1j}:p_{2j}), j=1,2,\ldots, n$ of $n$ pairs of Bernouli trials arising from the thermal components of the gaussian states $\rho$ and $\sigma$. The second part is purely quantum and it contains a sum of $n$ terms, each of which is arising from the annihilation mean $\alpha_k\in\mathbb{C}$ and the $2\times 2$ covariance matrices $T_k$  of 1-mode marginal states $\rho(\alpha_k,T_k)$ of a gaussian state $\rho'=U^\dag\,\rho\,U$ where $U=W(\pmb{\ell})\,\Gamma(L)$ corresponds to the disentangling gaussian symmetry of the state $\sigma$. 

Extending the above concepts we  derive an explicit formula for the Petz-R\'{e}nyi relative entropy \break $S_\alpha(\rho\vert\vert\sigma)=-\frac{1}{\alpha-1}\log{\rm Tr}\,
\rho^{\alpha}\sigma^{1-\alpha},\ 0<\alpha<1$ for two gaussian states $\rho$, $\sigma$ in $\Gamma(\mathbb{C}^n)$. We show that in the limit $\alpha\rightarrow 1$ the Petz-R\'{e}nyi relative entropy $S_\alpha(\rho\vert\vert\sigma)$ reduces to the relative entropy  $S(\rho\vert\vert\sigma)$ as expected.

\section{Preliminary concepts on gaussian states}

We  focus our attention on  gaussian states in the complex Hilbert space $L^2(\mathbb{R}^n)$ of a quantum system with $n$  degrees of freedom, which constitute a natural extension of the concept of normal (gaussian) distributions  in classical probability~\cite{Par10,Par13, Par-Sen-2015, Tiju-Par-2019, Tiju-Par-2021}.  

We shall use the Dirac notation  $\langle u\vert v\rangle$ for the scalar product  of any two elements $u$ and $v$ of a complex Hilbert space. The scalar product $\langle u\vert v\rangle$ is assumed to be  linear in $v$ and conjugate linear in $u$. For any operator
$A$   and elements $u$, $v$ in the Hilbert space  we write $\langle u\vert A\vert v\rangle = \langle A^\dagger\, u\vert v\rangle$ whenever $v$ is in the domain of  $A$
and $u$ is in the domain of $A^\dagger$, provided they are well-defined. 

Consider the Hilbert space $L^2(\mathbb{R}^n)$, or equivalently, the  boson Fock space $\mathcal{H}=\Gamma(\mathbb{C}^n)$,  over the  complex Hilbert space $\mathbb{C}^n$ of finite dimension $n$.  
Fix a canonical orthonormal basis 	$\left\{ e_k, \ 1\leq k\leq n\right\}$ in  $\mathbb{C}^n$ with 
\begin{eqnarray*}
	e_k=\left( 0,0,\ldots, 0,1,0,\ldots , 0\right)^T
\end{eqnarray*}  
where '1' appears in the $k^{\rm th}$ position.  

At every element $\mathbf{u}\in\mathbb{C}^n$ we associate a pair of operators $a(\mathbf{u})$, $a^\dag(\mathbf{u})$,  called annihilation, creation operators, respectively in the boson Fock space $\mathcal{H}$ and 
\begin{equation}
W(\mathbf{u})=e^{a^\dag(\mathbf{u})-a(\mathbf{u})}
\end{equation}
denotes the {\em Weyl displacement operator} or simply {\em Weyl operator} in $\mathcal{H}$. 
For any state $\rho$ in $\mathcal{H}$ the complex-valued function on $\mathbb{C}^n$ given by 
\begin{equation}
\hat{\rho}(\mathbf{u})={\rm Tr}\, W(\mathbf{u})\,\rho,\ \ \  \mathbf{u}\in \mathbb{C}^n
\end{equation}
is the {\em quantum characteristic function} of $\rho$ at $\mathbf{u}$. 
The state $\rho$ is said to have finite second moments if 
\begin{equation}
\langle a^\dag(\mathbf{u})\,a(\mathbf{u})\rangle_\rho={\rm Tr}\, a^\dag(\mathbf{u})\,a(\mathbf{u})\,\rho\, <\infty, \ \  \forall\ \mathbf{u}\in \mathbb{C}^n. 
\end{equation}
The {\em annihilation mean}, or simply the mean $\mathbf{m}\in\mathbb{C}^n$  in the state $\rho\in\mathcal{H}$ is defined through 
\begin{equation}
\langle a(\mathbf{u})\rangle_\rho={\rm Tr}\, a(\mathbf{u})\,\rho  = \langle\mathbf{u}\vert\mathbf{m}\rangle, \ \  \mathbf{u}\in \mathbb{C}^n. 
\end{equation}
Then 
\begin{equation}
\langle a^\dag(\mathbf{u})\rangle_\rho=\langle\mathbf{m}\vert\mathbf{u}\rangle.  
\end{equation}
Define the observables 
\begin{eqnarray}
q(\mathbf{u})&=&\frac{a(\mathbf{u})+a^\dag(\mathbf{u})}{\sqrt{2}}  \\ 
p(\mathbf{u})&=&\frac{a(\mathbf{u})-a^\dag(\mathbf{u})}{i\,\sqrt{2}}.
\end{eqnarray}
Then the quantum characteristic function of $\rho$ at $\mathbf{u}$ can be expressed as   
\begin{equation}
\hat{\rho}(\mathbf{u})={\rm Tr}\, e^{-i\,\sqrt{2}\, p(\mathbf{u})}\, \rho. 
\end{equation}
Let $\mathbf{u}=\mathbf{x}+i\,\mathbf{y},\ \ \mathbf{x},\, \mathbf{y}\in\mathbb{R}^n.$
Then  $\mathbf{x}\rightarrow q(\mathbf{x})$, 
$\mathbf{y}\rightarrow p(\mathbf{y})$ 
are {\em position} and {\em momentum} fields obeying the commutation relations 
\begin{equation}
\left[q(\mathbf{x}),\, p(\mathbf{y})\right]=i\, \mathbf{x}^T\mathbf{y}. 
\end{equation}
Then 
$$q_k=q(e_k),\ p_k=p(e_k), \ \ k=1,2,\ldots ,n$$ 
yield the {canonical commutation relations} (CCR)~\cite{Par10}
\begin{equation}
\left[q_j,\ q_k\right]=0,\ \ \left[p_j,\ p_k\right]=0,\ \ \left[q_j,\ p_k\right]= i\, \delta_{jk}.
\end{equation}
Variance of $p(\mathbf{u})$ in the state $\rho$ yields a quadratic form in $\left(x_1,x_2,\ldots , x_n; y_1,y_2,\ldots , y_n\right)^T\in\mathbb{R}^{2n}$ so that 
\begin{eqnarray}
{\rm Var}_{\rho}\,p(\mathbf{x}+i\, \mathbf{y})&=& \left(\begin{array}{l} \mathbf{x},\mathbf{y} \end{array}\right)^T\, C\, \left(\begin{array}{ll} \mathbf{x} \\ \mathbf{y}\end{array} \right). 
\end{eqnarray}
Here $C$ is a real $2n\times 2n$ positive definite matrix, satisfying the matrix inequality 
\begin{equation}
C+\frac{i}{2}\,J \geq 0
\end{equation}
with 
\begin{equation}
\label{sym_metric}
J=\left(\begin{array}{cc} 0_n & I_n \\ -I_n & 0_n \end{array}\right)
\end{equation}
being the canonical symplectic matrix in the real symplectic matrix group of order $2n$: 
$$ {\rm Sp}(2n,\mathbb{R})=\left\{L\,\vert\, L^T\,J\,L\,=J\right\}.$$
In  (\ref{sym_metric}) the right hand side  is expressed in the block notation, with $0_n$, $I_n$  denoting $n\times n$ null and identity matrices respectively. 
The correspondence 
\begin{equation}
t\mapsto W(t\,{\mathbf u})=e^{-i\, t\,\sqrt{2}\, p(\mathbf{u})} 
\end{equation}
is a strongly continuous one-parameter group of unitary operators in $t\in\mathbb{R}$, and 
$\hat{\rho}(t,\mathbf{u})={\rm Tr}\, W(t\,{\mathbf u})\, \rho$ 
is {\em the  characteristic function} of the observable $p(\mathbf{u})$, which has the normal or gaussian distribution with mean value 
\begin{equation}
\frac{\langle \mathbf{m}\vert \mathbf{u}\rangle-\langle \mathbf{u}\vert \mathbf{m}\rangle}{i\,\sqrt{2}}
\end{equation}
and variance 
\begin{equation}
\left(\begin{array}{ll} \mathbf{x}^T, & \mathbf{y}^T\end{array}\right)\, C\, \left(\begin{array}{l} \mathbf{x} \\  \mathbf{y}\end{array}\right),\ \ \mathbf{x}+i\,\mathbf{y}=\mathbf{u}.
\end{equation}
Then 
\begin{equation}
\widehat{\rho}\left(t\,(\mathbf{x}+i\,\mathbf{y})\right)={\rm exp}\left[t\, \left(\langle \mathbf{m}\,\vert \mathbf{u}\rangle-\langle \mathbf{u}\vert \mathbf{m}\rangle\right) - t^2\, \left(\begin{array}{ll} \mathbf{x}^T, & \mathbf{y}^T\end{array}\right)\, C\, \left(\begin{array}{l} \mathbf{x} \\  \mathbf{y}\end{array}\right) \right],\ \ 
\end{equation}
for all  $t\in\mathbb{R}$,\ $\mathbf{u}\in\mathbb{C}^n$ with $\mathbf{u}=\mathbf{x}+i\, \mathbf{y}.$ Thus $\rho$ is a quantum gaussian state with mean $\mathbf{m}$ and covariance matrix $C$ if and only if 
\begin{eqnarray}
\widehat{\rho}\left(\mathbf{x}+i\,\mathbf{y}\right)&=&{\rm exp}\left[2\,i\, {\rm Im}(\mathbf{x}-i\,\mathbf{y})^T\,\mathbf{m} -  \left(\begin{array}{ll} \mathbf{x}^T, & \mathbf{y}^T\end{array}\right)\, C\, \left(\begin{array}{l} \mathbf{x} \\  \mathbf{y}\end{array}\right) \right] \ \  \nonumber \\
&=& {\rm exp}\left[2\,i\, \left(\mathbf{x}^T\,{\rm Im}\,\mathbf{m} -\mathbf{y}^T\,{\rm Re}\,\mathbf{m}\right)- \left(\begin{array}{ll} \mathbf{x}^T, & \mathbf{y}^T\end{array}\right)\, C\, \left(\begin{array}{l} \mathbf{x} \\  \mathbf{y}\end{array}\right) \right]\ \ 
\end{eqnarray}
for all $\mathbf{x},\, \mathbf{y}\in\mathbb{R}^n.$ 
We write 
$$\rho(\mathbf{m},C)=n{\rm -mode\  {\rm\ gaussian\  state\  \ }   with\ mean}\ \mathbf{m} \ {\rm and\ covariance \ matrix\ } C.$$
The covariance matrix may be expressed in the $n\times n$ block matrix notation as
\begin{equation}
C=\left(\begin{array}{ll} C_{11} & C_{12} \\ C_{21} & C_{22}\end{array} \right) 
\end{equation} 
where 
\begin{eqnarray}
C_{11}&=&{\rm Cov}\,\left(p_1,p_2,\ldots\, p_n\right) \nonumber \\
C_{22}&=& {\rm Cov}\,\left(q_1,q_2,\ldots\, q_n\right)  \\ 
C_{12}&=&  \left(-{\rm Cov}\, (p_j, q_k)\right) = C_{21}^T.\nonumber 
\end{eqnarray}
Note that 
\begin{equation}
W(\mathbf{z})\, \rho(\mathbf{m},C)\, W(\mathbf{z})^\dag=  \rho(\mathbf{m}+\mathbf{z},C), \ \ \mathbf{z}\in\mathbb{C}^n.
\end{equation} 


\section {Williamson's theorem applied to $n$-mode gaussian covariance matrix} 

Let $C$ be a  covariance matrix of an $n$-mode gaussian state $\rho$ as described above. Then there exist $\nu_1\geq \nu_2\geq \ldots \geq \nu_n \geq 1/2$ and a sympletic matrix 
$L\in {\rm Sp}(2n,\mathbb{R})$ such that 
\begin{equation}
\label{WNF}
C = L^T\, \left(\begin{array}{cc}  {\rm diag}\,(\nu_1, \nu_2,\ldots, \nu_n) & 0_n \\
& \\
0_n &  {\rm diag}\,(\nu_1, \nu_2,\ldots, \nu_n)
\end{array}\right)\, L   
\end{equation}

\begin{thm} To every  $L\in {\rm Sp}(2n,\mathbb{R})$, there exists a unitary operator $\Gamma(L)$ in the boson Fock space $\mathcal{H}=\Gamma(\mathbb{C}^n)$ satisfying 
	\begin{equation}
	\Gamma(L)\, W(\mathbf{u})\,\Gamma(L)^{-1}=W(L\circ \mathbf{u}), \  \mathbf{u}\in \mathbb{C}^n
	\end{equation}
	where  
	\begin{eqnarray}
	L\circ \mathbf{u} &=&  	
	\left(\begin{array}{cc}
	I_n, & i \,I_n 
	\end{array}\right) \, L  \,  \left(\begin{array}{c}
	\mathbf{x} \\ \mathbf{y} 
	\end{array}\right) \nonumber \\
	&=& \left(A_{11}\,  \mathbf{x} + A_{22}\, \mathbf{y}  \right) +i\, \left(A_{21}\,  \mathbf{x} + A_{22}\, \mathbf{y}  \right), 
	\end{eqnarray}
	with  $A_{11},\, A_{12}, \, A_{21},\, A_{22}$ denoting $n\times n$ real matrices constituting  $L$ as   
	\begin{equation}
	L=	\left(\begin{array}{cc}
	A_{11} & A_{12}  \\
	A_{21} & A_{22}
	\end{array}\right), 
	\end{equation}
	and $\mathbf{u}=\mathbf{x}+i\,\mathbf{y}$, $\mathbf{x}, \mathbf{y}\in \mathbb{R}^n$. The unitary operator  $\Gamma(L)$ is unique up to a scalar multiple of modulus unity. 
\end{thm}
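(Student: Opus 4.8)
The plan is to construct $\Gamma(L)$ by exhibiting it first on a generating family of vectors and then extending by continuity and unitarity, using the Stone--von Neumann uniqueness theorem as the conceptual engine.

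\medskip

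\noindent\textbf{Step 1: Reduce to an intertwining problem for representations of the CCR.}
First I would observe that the map $\mathbf{u}\mapsto W(\mathbf{u})$ is an irreducible projective (Weyl) representation of the additive group $\mathbb{C}^n\cong\mathbb{R}^{2n}$ satisfying the Weyl form of the CCR, $W(\mathbf{u})W(\mathbf{v})=e^{-i\,\mathrm{Im}\langle\mathbf{u}\vert\mathbf{v}\rangle}\,W(\mathbf{u}+\mathbf{v})$. Define a second family $W'(\mathbf{u}):=W(L\circ\mathbf{u})$. Because $L\in\mathrm{Sp}(2n,\mathbb{R})$ preserves the symplectic form, i.e. $\mathrm{Im}\langle L\circ\mathbf{u}\vert L\circ\mathbf{v}\rangle=\mathrm{Im}\langle\mathbf{u}\vert\mathbf{v}\rangle$ (this is exactly the content of $L^TJL=J$ rewritten in the complex coordinates supplied by $L\circ(\,\cdot\,)$), the family $\{W'(\mathbf{u})\}$ also satisfies the same Weyl commutation relations, is strongly continuous, and acts irreducibly on the same Fock space $\mathcal H$. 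So I have two strongly continuous irreducible Weyl systems over $\mathbb{R}^{2n}$ on one Hilbert space.

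\medskip

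\noindent\textbf{Step 2: Invoke Stone--von Neumann.}
By the Stone--von Neumann theorem, any two strongly continuous irreducible representations of the CCR over a finite-dimensional phase space are unitarily equivalent, and the intertwining unitary is unique up to a phase. Applying this to $\{W(\mathbf{u})\}$ and $\{W'(\mathbf{u})\}=\{W(L\circ\mathbf{u})\}$ yields a unitary $\Gamma(L)$ on $\mathcal H$, unique up to a scalar of modulus one, with
\begin{equation*}
\Gamma(L)\,W(\mathbf{u})\,\Gamma(L)^{-1}=W(L\circ\mathbf{u}),\qquad \mathbf{u}\in\mathbb{C}^n,
\end{equation*}
which is the assertion. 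The uniqueness clause is immediate: if $\Gamma_1,\Gamma_2$ both work, then $\Gamma_2^{-1}\Gamma_1$ commutes with every $W(\mathbf{u})$, hence with an irreducible family, hence is a scalar by Schur's lemma, necessarily of modulus one since it is unitary.

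\medskip

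\noindent\textbf{Step 3 (alternative constructive route, in case a self-contained proof is wanted).}
If one prefers not to quote Stone--von Neumann as a black box, I would instead build $\Gamma(L)$ directly on exponential (coherent) vectors $e(\mathbf{z})$, $\mathbf{z}\in\mathbb{C}^n$, which are total in $\mathcal H$ and on which $W(\mathbf{u})$ acts explicitly. One writes down the candidate action of $\Gamma(L)$ on coherent vectors forced by the desired intertwining relation (this uses that $L\circ(\,\cdot\,)$ is $\mathbb{R}$-linear, hence decomposes into a $\mathbb{C}$-linear and a $\mathbb{C}$-antilinear part, producing a Bogoliubov transformation), checks that the resulting map preserves inner products of coherent vectors, and then extends it to a unitary by density. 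The symplectic condition $L^TJL=J$ is exactly what makes the inner-product computation come out consistent. The main obstacle here is the bookkeeping of the Bogoliubov coefficients and verifying that the Gaussian kernel one obtains is normalizable (equivalently, that $\Gamma(L)$ is bounded with bounded inverse); this is where the hypothesis $L\in\mathrm{Sp}(2n,\mathbb{R})$, as opposed to a general real invertible matrix, is indispensable. Either way, the conceptual heart of the proof is the passage ``symplectic $\Rightarrow$ preserves the CCR $\Rightarrow$ implemented by a unitary,'' and I would expect the phase ambiguity and the irreducibility input (Schur's lemma) to be the only subtle points to state carefully.
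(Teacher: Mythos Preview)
Your proposal is correct and takes essentially the same approach as the paper: the paper's proof consists of the single sentence ``Follows as a consequence of Stone--von Neumann theorem on Weyl operator,'' and your Steps~1--2 spell out exactly how that invocation works (the symplectic condition guarantees the new Weyl system satisfies the same CCR, Stone--von Neumann gives the intertwining unitary, and Schur's lemma yields uniqueness up to a phase). Your Step~3 is an additional constructive alternative not present in the paper.
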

\begin{proof}
	Follows as a consequence of Stone-von Neumann theorem on Weyl operator~\cite{Par10,Par13}.
\end{proof}

\section{Structure theorem for $n$-mode gaussian states } 

For $0< s\leq  \infty$, we may associate a single mode gaussian {\em thermal} state $\rho(s)$  in $\Gamma(\mathbb{C})$ by 
\begin{equation}
\rho(s)=(1-e^{-s})\, e^{-s\, a^\dag\, a}
\end{equation} 
where  $s$ denotes inverse temperature  and  $\rho(\infty)=\vert \Omega\rangle \langle \Omega \vert$,  the vacuum state or the thermal state at zero temperature. Every thermal state 
$\rho(s)$ yields a natural binomial distribution with probability for success equal to $e^{-s}$, where success stands for the event that the number of particles detected is greater than or equal to 1. Then, failure is equivalent to {\em no particle count} with probability $1-e^{-s}$.

The von Neumann entropy of the thermal state $\rho(s)$ is given by 
\begin{eqnarray}
S\left(\rho(s)\right)&=&-{\rm Tr} \rho(s)\, \log\, \rho(s) \nonumber \\
&=& \frac{H(e^{-s})}{1-e^{-s}}. 
\end{eqnarray}
Here $H(p)=-p\, \log\, p - (1-p)\, \log\, (1-p),\  0\leq p \leq 1$ is the Shannon function. 
Note that $\rho(s)\rightarrow \rho(\infty)=\vert \Omega\rangle \langle \Omega \vert$ as $s\rightarrow \infty$. The  state $\rho(\infty)$ being pure,  
$S\left(\rho(\infty)\right)=0$.  

Consider any $n$-mode gaussian state $\rho(\mathbf{m}, C)$ with mean $\mathbf{m}\in \mathbb{C}^n$ and real symmetric positive definite $2n\times 2n$ covariance matrix $C$. Then, there exist $0< s_1 \leq s_2 \leq \ldots \leq s_n \leq \infty$, and a symplectic matrix $L\in {\rm Sp}(2n,\mathbb{R})$ such that 
\begin{equation}
\rho(\mathbf{m}, C)=W(\mathbf{m})\, \Gamma(L)\,  \rho(s_1)\otimes\rho(s_2)\otimes \ldots \otimes \rho(s_n) \, \Gamma(L)^{-1}\, W(\mathbf{m})^{-1}. 
\end{equation}
The sequence $s_1,s_2,\ldots , s_n$ is unique. The covariance matrix $C$ is given by  
\begin{eqnarray}
C=\left(L^{-1}\right)^T\, D(\mathbf{s}) L^{-1},\ \ 
D(\mathbf{s})=\left(\begin{array}{cc} D_0({\mathbf{s}}) &  0_n \\ 
0_n & D_0({\mathbf{s}})  \end{array}   \right), \\
D_0({\mathbf{s}})={\rm diag}\left[\frac{1}{2}\coth\left(\frac{s_k}{2}\right), k=1,2,\ldots n\ \right].
\end{eqnarray}

Thus the von Neumann entropy $S(\rho)$ is given by 
\begin{equation}
S(\rho)=\sum_{k:\, s_k<\infty} \frac{H\left(e^{-s_k}\right)}{1-e^{-s_{k}}}.
\end{equation}
\section{Relative entropy $S(\rho\vert\vert\sigma)$ of two gaussian states}

To set the stage ready for the computation of relative entropy $S(\rho\vert\vert\sigma)={\rm Tr}\,\rho\,(\log\,\rho-\log\,\sigma)$ of two gaussian states $\rho$, $\sigma$ we present the following lemmas. 

\begin{lem} 
	\label{L1} Let  $\vert \psi \rangle \in  {\mathcal H}, \ \vert\vert\, \psi\, \vert\vert=1$, and $\rho$ be a state in ${\mathcal H}$ such that $\langle \psi\vert \rho\,\vert \psi\rangle<1$. Then 
	\begin{equation}
	S(\,\rho \left \vert\vert \,   \vert \psi \rangle \langle \psi \vert \right. ):= \infty.
	\end{equation}	
\end{lem}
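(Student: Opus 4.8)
The plan is to unwind the definition of relative entropy for the special case where the second state is a pure projection $|\psi\rangle\langle\psi|$. Writing $P = |\psi\rangle\langle\psi|$, we have $\log P$ acting on the range of $P$ as $0$ and on the orthogonal complement as $-\infty$; more precisely, the support of $\rho$ must be contained in the support of $\sigma$ for $S(\rho\|\sigma)$ to be finite, and here the support of $P$ is the one-dimensional space $\mathbb{C}|\psi\rangle$. So the heuristic is: if $\rho$ has any component outside $\mathbb{C}|\psi\rangle$ — which is exactly the condition $\langle\psi|\rho|\psi\rangle < 1$ — then $\mathrm{Tr}\,\rho(\log\rho - \log\sigma) = +\infty$.

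First I would make this rigorous by working with a spectral resolution or a finite-dimensional truncation. Concretely, I would use the variational/limiting characterization: write $\sigma_\varepsilon = (1-\varepsilon) P + \varepsilon \frac{I - P}{\dim - 1}$ in finite dimensions, or more robustly, recall the standard fact that $S(\rho\|\sigma) = \sup$ over suitable test operators, or simply that $\log$ is operator monotone and $S(\rho\|\sigma) \ge \mathrm{Tr}\,\rho\log\rho - \mathrm{Tr}\,\rho\log\sigma$ with the second trace computed via $-\log\langle\psi|\rho|\psi\rangle$-type bounds. The cleanest route: let $Q = I - P$, and note $-\mathrm{Tr}\,\rho\log\sigma$ involves $\log$ of $P$, whose eigenvalue on $\mathrm{range}(Q)$ is $0$, so $\log P$ is $-\infty$ there. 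Use monotonicity of relative entropy under the pinching (dephasing) channel $\mathcal{E}(X) = PXP + QXQ$: since $\sigma = P$ is fixed by $\mathcal{E}$, we get $S(\rho\|P) \ge S(\mathcal{E}(\rho)\|P)$, and $\mathcal{E}(\rho) = \langle\psi|\rho|\psi\rangle\, P \oplus Q\rho Q$ is block diagonal. For a block-diagonal pair against $P = 1 \oplus 0$, the relative entropy is $S\big(\mathrm{diag}(p, \rho_Q)\,\big\|\,\mathrm{diag}(1,0)\big)$ where $p = \langle\psi|\rho|\psi\rangle < 1$; this equals $p\log p - p\log 1 + \mathrm{Tr}(\rho_Q\log\rho_Q) - \mathrm{Tr}(\rho_Q \log 0)$, and the last term is $+\infty$ whenever $\rho_Q \ne 0$, i.e. whenever $p < 1$.

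The main obstacle is handling the infinite-dimensionality of $\mathcal{H} = \Gamma(\mathbb{C}^n)$ carefully: the pinching channel argument and the "support containment" criterion are textbook in finite dimensions, but one should cite or invoke the appropriate version for trace-class operators on a separable Hilbert space (e.g. Lindblad's or Araki's formulation of relative entropy, or simply Umegaki's definition together with the convention that $S(\rho\|\sigma) = +\infty$ when $\mathrm{supp}\,\rho \not\subseteq \mathrm{supp}\,\sigma$). Once that convention is in place — and it is the standard one — the statement is immediate: $\mathrm{supp}\,\sigma = \mathbb{C}|\psi\rangle$ is one-dimensional, while $\langle\psi|\rho|\psi\rangle < 1$ forces $\rho$ to have a nonzero component orthogonal to $|\psi\rangle$, so $\mathrm{supp}\,\rho \not\subseteq \mathrm{supp}\,\sigma$ and hence $S(\rho\|\,|\psi\rangle\langle\psi|) = \infty$. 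I would therefore structure the proof as: (i) recall the support criterion/convention for $S(\rho\|\sigma)$; (ii) observe $\langle\psi|\rho|\psi\rangle < 1 \iff \rho \ne |\psi\rangle\langle\psi| \rho\, |\psi\rangle\langle\psi|$ has a component in $|\psi\rangle^\perp$; (iii) conclude. If the paper prefers a self-contained computation, substitute step (i) with the pinching-monotonicity reduction to a block-diagonal pair sketched above.
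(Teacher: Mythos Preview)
Your proposal is correct, but the paper takes a more elementary and direct route. Rather than invoking monotonicity under the pinching channel or the general support-containment convention, the paper simply extends $|\psi\rangle$ to an orthonormal basis $\{|\psi_k\rangle : k \ge 0\}$ with $|\psi_0\rangle = |\psi\rangle$. Since $\sum_k \langle\psi_k|\rho|\psi_k\rangle = 1$ and $\langle\psi_0|\rho|\psi_0\rangle < 1$, there exists $k_0 \ge 1$ with $\langle\psi_{k_0}|\rho|\psi_{k_0}\rangle > 0$; this vector lies in the kernel of $|\psi\rangle\langle\psi|$, hence is formally an eigenvector of $-\log|\psi\rangle\langle\psi|$ with eigenvalue $+\infty$, so the diagonal contribution already gives $-\mathrm{Tr}\,\rho\log|\psi\rangle\langle\psi| \ge \langle\psi_{k_0}|\rho|\psi_{k_0}\rangle \cdot \infty = \infty$. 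Your pinching argument is more robust in that it rests on a general principle (data processing) and would apply verbatim to any $\sigma$ with deficient support, while the paper's argument is shorter and needs no external results; your ``support criterion'' version is essentially the same statement as the lemma itself, so if you go that route you are citing rather than proving.
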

\begin{proof}
	Write $\vert \psi_0\rangle=\vert \psi \rangle,\,\vert \psi_1\rangle,\, \vert \psi_2\rangle, \ldots  $   such that $\{ \vert \psi_k\rangle,\, k=0,1,\ldots  \} $ forms an orthonormal basis (ONB) in ${\mathcal H}$. Then 
	\begin{equation}
	\sum_{k=0}^\infty \langle \psi_k\vert \rho\,\vert \psi_k\rangle=1.
	\end{equation} 
	For all $k_0: \, k_0\geq 1$ such that $\langle \psi_{k_0}\vert \rho \,\vert\psi_{k_0}\rangle>0,\ \vert \psi_{k_0}\rangle$   is an eigenvector of  
	$-\log\, \vert \psi_0 \rangle \langle \psi_0 \vert$ with eigenvalue $\infty$. Thus  
	$$-{\rm Tr}\, \rho \log\,  \vert \psi_0 \rangle \langle \psi_0\vert \geq \langle \psi_{k_0}\vert \rho\,\vert \psi_{k_0}\rangle \, \times \infty=\infty.  $$ 
	Then
	\begin{equation}
	S(\,\rho \left \vert\vert \,   \vert \psi_0 \rangle \langle \psi_0 \vert \right. ) = {\rm Tr}\, \rho\, \log\, \rho - {\rm Tr}\, \rho\, \log\,  \vert \psi_0 \rangle \langle \psi_0\vert  =\infty.
	\end{equation}
\end{proof}

\begin{lem} 
	\label{L2} 
	Consider  Hilbert spaces ${\mathcal H}_k,\ k=1,2,\ldots , n$ and states $\sigma_k\in{\mathcal H}_k$.  
	Relative entropy of any arbitrary state   $\rho$  with respect to the state $\sigma=\sigma_1\otimes\sigma_2\otimes \ldots    \otimes\sigma_n,$  in ${\mathcal H}_1\otimes{\mathcal H}_2\otimes \ldots    \otimes{\mathcal H}_n$ is given by 
	\begin{equation}
	S\left(\,\rho \left\vert\vert \, \sigma\right.\right)=-S(\rho)-\sum_{k=1}^{n}{\rm Tr} \, \rho_k\, \log \sigma_k
	\end{equation}
	where $\rho_k$ is the $k$-th marginal of the state $\rho$ in  ${\mathcal H}_k$.
\end{lem}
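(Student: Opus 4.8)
The plan is to expand $S(\rho\vert\vert\sigma)={\rm Tr}\,\rho\,\log\rho-{\rm Tr}\,\rho\,\log\sigma$ and deal with the two terms separately. The first term is, by definition, $-S(\rho)$, so no work is needed there. The burden of the proof is entirely in rewriting ${\rm Tr}\,\rho\,\log\sigma$ as $\sum_{k=1}^n{\rm Tr}\,\rho_k\,\log\sigma_k$, and the engine for this is the multiplicativity of the logarithm over tensor products, namely the operator identity
\begin{equation}
\log(\sigma_1\otimes\sigma_2\otimes\cdots\otimes\sigma_n)=\sum_{k=1}^{n}I\otimes\cdots\otimes I\otimes(\log\sigma_k)\otimes I\otimes\cdots\otimes I,
\end{equation}
where in the $k$-th summand $\log\sigma_k$ sits in the $k$-th slot. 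I would first justify this identity: each $\sigma_k$ being a state, it is positive with a spectral decomposition $\sigma_k=\sum_j \lambda_{kj}\vert e_{kj}\rangle\langle e_{kj}\vert$, the product $\sigma$ has product eigenvectors $\vert e_{1j_1}\rangle\otimes\cdots\otimes\vert e_{nj_n}\rangle$ with eigenvalues $\prod_k\lambda_{kj_k}$, and $\log$ of a product of eigenvalues is the sum of the logarithms; this is exactly the displayed identity read off on the eigenbasis. (If some $\lambda_{kj}=0$ the corresponding eigenvalue of $\sigma$ is $0$ and one uses the standard convention, consistent with Lemma~\ref{L1}, that $S(\rho\vert\vert\sigma)=\infty$ unless ${\rm supp}\,\rho\subseteq{\rm supp}\,\sigma$; in the finite case there is nothing to worry about, and in the generic gaussian application all $\lambda_{kj}>0$.)

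Next I would substitute this into ${\rm Tr}\,\rho\,\log\sigma$ and push the trace through the sum:
\begin{equation}
{\rm Tr}\,\rho\,\log\sigma=\sum_{k=1}^{n}{\rm Tr}\Big(\rho\,\big(I\otimes\cdots\otimes(\log\sigma_k)\otimes\cdots\otimes I\big)\Big).
\end{equation}
The key step is then the definition of the partial trace: for any operator $X_k$ acting on ${\mathcal H}_k$ alone,
\begin{equation}
{\rm Tr}_{{\mathcal H}_1\otimes\cdots\otimes{\mathcal H}_n}\Big(\rho\,\big(I\otimes\cdots\otimes X_k\otimes\cdots\otimes I\big)\Big)={\rm Tr}_{{\mathcal H}_k}\big(\rho_k\,X_k\big),
\end{equation}
which is precisely the characterizing property of the $k$-th marginal $\rho_k={\rm Tr}_{\widehat{k}}\,\rho$. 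Applying this with $X_k=\log\sigma_k$ to each summand yields $\sum_{k=1}^n{\rm Tr}\,\rho_k\,\log\sigma_k$, and combining with the first term gives the claimed formula.

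I do not expect a genuine obstacle here; the only point requiring a little care is domain/finiteness, since $\log\sigma_k$ is unbounded when $\sigma_k$ has eigenvalues accumulating at $0$ (e.g.\ a thermal state). The clean way to handle this is to note that either ${\rm Tr}\,\rho_k\,\log\sigma_k$ is finite for every $k$ — in which case the manipulations above are justified term by term by monotone/dominated convergence applied to the truncated spectral sums — or it fails for some $k$, in which case $-{\rm Tr}\,\rho\,\log\sigma=+\infty$ and the formula holds in the extended sense. For the gaussian states to which the lemma is applied, $\log\sigma_k$ is affine in the number operator $a^\dagger a$, so ${\rm Tr}\,\rho_k\,\log\sigma_k$ is controlled by the first and second moments of $\rho_k$, which are finite by the standing assumption that $\rho$ has finite second moments; hence the formula is an honest identity in that setting.
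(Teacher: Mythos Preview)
Your argument is correct and follows exactly the paper's approach: write $\log\sigma=\sum_k I\otimes\cdots\otimes\log\sigma_k\otimes\cdots\otimes I$, then use the defining property of the partial trace to reduce each summand to ${\rm Tr}\,\rho_k\,\log\sigma_k$. You supply more justification than the paper (the spectral argument for the logarithm identity and the discussion of domain/finiteness), but the skeleton is identical.
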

\begin{proof}
	Write $\widetilde{\sigma}_k=I\otimes I\otimes \ldots \otimes I \otimes \sigma_k\otimes I \otimes \ldots \otimes I,$ where $\sigma_k$ appears at the $k$-th position. Then 
	\begin{equation}
	\log\,\sigma=\sum_{k}\,\log\,\widetilde{\sigma}_k 
	\end{equation}
	and 
	\begin{eqnarray}
	{\rm Tr}\, \rho\, \log\,\sigma&=&\sum_{k}\, {\rm Tr}\,\rho\, \log\,\widetilde{\sigma}_k \nonumber \\
	&=& \sum_{k}\, {\rm Tr}\,\rho_k\, \log\,\sigma_k
	\end{eqnarray}
	So 
	\begin{eqnarray}
	S(\,\rho \left \vert\vert \,\sigma\right.)&=&{\rm Tr}\, \rho\, \log\, \rho - {\rm Tr}\, \rho\, \log\, \sigma  \nonumber \\
	&=& -S(\rho)-\sum_k\, {\rm Tr}\, \rho_k\, \log\, \sigma_k.   
	\end{eqnarray}
\end{proof}

\begin{lem} 
	\label{L3} 
	Let $\rho(m, T)$ be a 1-mode gaussian state with annihilation mean $m\in \mathbb{C}$ and $2\times 2$ covariance matrix $T$. Consider a thermal state $\rho(t)=(1-e^{-t})\,e^{-t\,a^\dag\,a }$ with $a,\, a^\dag$ denoting 1-mode annihilation and creation operators respectively.  Then 
	\begin{equation}
	\label{313}
	{\rm Tr}\, \rho(m, T)\, \log\, \rho(t)=\log\, (1-e^{-t})\, - \frac{t}{2}\, \left({\rm Tr}\, T+2\, \vert m\vert^2-1 \right), \ \ {\rm if\ } 0<t<\infty. 
	\end{equation}
\end{lem}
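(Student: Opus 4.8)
The plan is to compute the trace $\mathrm{Tr}\,\rho(m,T)\,\log\rho(t)$ directly from the explicit form $\log\rho(t)=\log(1-e^{-t})\,I - t\,a^\dagger a$, which reduces the problem to evaluating $\mathrm{Tr}\,\rho(m,T)\,I = 1$ and the mean photon number $\langle a^\dagger a\rangle_{\rho(m,T)}=\mathrm{Tr}\,\rho(m,T)\,a^\dagger a$. So the whole statement is equivalent to the identity $\langle a^\dagger a\rangle_{\rho(m,T)}=\tfrac12(\mathrm{Tr}\,T+2|m|^2-1)$.

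First I would write $a^\dagger a$ in terms of the quadratures. Using $a=\tfrac{1}{\sqrt2}(q+ip)$, $a^\dagger=\tfrac{1}{\sqrt2}(q-ip)$ one gets $a^\dagger a=\tfrac12(q^2+p^2+i[q,p])=\tfrac12(q^2+p^2-1)$ since $[q,p]=i$. Hence $\langle a^\dagger a\rangle_\rho=\tfrac12(\langle q^2\rangle_\rho+\langle p^2\rangle_\rho-1)$ for any $1$-mode state. Next I would relate the second moments $\langle q^2\rangle$, $\langle p^2\rangle$ to the covariance matrix $T$ and the mean $m$. By the conventions fixed in Section 2, the entries of the covariance matrix are $T_{11}=\mathrm{Var}\,p$, $T_{22}=\mathrm{Var}\,q$, and if $m=\tfrac{1}{\sqrt2}(\langle q\rangle + i\langle p\rangle)$ then $|m|^2=\tfrac12(\langle q\rangle^2+\langle p\rangle^2)$. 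Therefore $\langle q^2\rangle+\langle p^2\rangle = T_{22}+T_{11}+\langle q\rangle^2+\langle p\rangle^2 = \mathrm{Tr}\,T + 2|m|^2$, and substituting gives $\langle a^\dagger a\rangle_{\rho(m,T)}=\tfrac12(\mathrm{Tr}\,T+2|m|^2-1)$. Plugging this back into $\mathrm{Tr}\,\rho(m,T)\log\rho(t)=\log(1-e^{-t}) - t\,\langle a^\dagger a\rangle_{\rho(m,T)}$ yields the claimed formula.

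Alternatively, and perhaps more cleanly, one can derive $\langle a^\dagger a\rangle$ from the characteristic function. Since $\rho(t)$ is itself the gaussian thermal state with $\langle a^\dagger a\rangle_{\rho(t)}=(e^t-1)^{-1}$ and covariance $\tfrac12\coth(t/2)I_2$, the relation $\langle a^\dagger a\rangle=\tfrac12(\mathrm{Tr}\,T+2|m|^2-1)$ can be checked on this family and then extended to all $\rho(m,T)$ by invariance under Weyl displacements (which shift $m$ and leave $T$ fixed) and under the metaplectic representation $\Gamma(L)$ of symplectic transformations (which act on $T$ by congruence, preserving $\mathrm{Tr}\,T$ only for the subgroup of "rotations"; so this route actually requires a bit more care). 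I expect the main obstacle to be purely bookkeeping: pinning down the precise normalization conventions (the factors of $\sqrt2$ relating $a$ to $q,p$, and the placement of $\mathrm{Var}\,p$ versus $\mathrm{Var}\,q$ in the blocks of $C$) so that the constants come out exactly as in \eqref{313}. Once the conventions of Section 2 are used consistently, the computation is immediate; the restriction $0<t<\infty$ is needed only so that $\log\rho(t)$ is a well-defined (bounded-below) operator, the case $t=\infty$ being covered by Lemma \ref{L1}.
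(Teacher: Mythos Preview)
Your main argument is correct and is essentially identical to the paper's own proof: both reduce the trace to $\log(1-e^{-t}) - t\,\langle a^\dag a\rangle_{\rho(m,T)}$, rewrite $a^\dag a=\tfrac12(p^2+q^2-1)$, and then split the second moments into variances (giving ${\rm Tr}\,T$) plus squared means (giving $2|m|^2$). Your alternative route via invariance is not pursued in the paper and, as you yourself note, would need extra care since general symplectic congruences do not preserve ${\rm Tr}\,T$.
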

\begin{proof}  It is readily seen that 
	\begin{eqnarray}
	{\rm Tr}\, \rho(m, T)\, \log\, \rho(t)&=& {\rm Tr}\, \rho(m, T)\, \left\{ \log\, (1-e^{-t})\, -t\, a^\dag\, a\, \right\}   \nonumber 
	\\
	&=& \log\, (1-e^{-t})\, -t\, \left\langle \, a^\dag\, a \right\rangle_{\rho(m, T)}. 
	\end{eqnarray}
	Expressing 
	\begin{eqnarray}
	\left\langle \, a^\dag\, a \right\rangle_{\rho(m, T)}&=&\frac{1}{2}\,  \left\langle \, \left(p^2 +q^2-1\right)\, \right\rangle_{\rho(m, T)} \nonumber \\
	&=& \frac{1}{2}\, \left(  
	{\rm Var}\,(p)_{\rho(m, T)}  +{\rm Var}\,(q)_{\rho(m, T)}+  \langle p\,\rangle_{\rho(m, T)}^2+\langle q\,\rangle_{\rho(m, T)}^2-1  \right) \nonumber \\
	&=& \frac{1}{2}\, \left({\rm Tr}\, T + 2\, \left\vert\,\left\langle \frac{q+ip}{\sqrt{2}}\, \right\rangle_{\rho(m, T)}\,  \right\vert^2-1 \right)  \nonumber  \\
	&=& \frac{1}{2}\, \left({\rm Tr}\, T + 2\, \vert m \vert^2 -1 \right)
	\end{eqnarray}
	we obtain (\ref{313}).
\end{proof}

\begin{lem} 
	\label{L4} 
	Let $\rho=\rho(\mathbf{m}, C),\sigma$ be  gaussian states in $\Gamma(\mathbb{C}^n)$, \ where $\mathbf{m}$ is the annihilation mean  i.e., $\langle\, a(\mathbf{u})\rangle_\rho~=~\langle \mathbf{u}\vert \mathbf{m}\rangle$ and $C$ is the covariance matrix of $\rho$. Let the standard structure of the gaussian state $\sigma$ be 
	\begin{equation}
	\label{316}
	\sigma= W(\pmb{\ell})\, \Gamma(L)\, \rho(t_1)\otimes \rho(t_2)\otimes \ldots \otimes \rho(t_n) \,\Gamma(L)^{-1}\, W(\pmb{\ell})^{-1} 
	\end{equation} 
	so that $\pmb{\ell}$ is the annihilation mean of $\sigma$. Here $L\in{\rm Sp}(2n,\,\mathbb{R})$ with 
	$\Gamma(L)\, W(\mathbf{u})\,\Gamma(L)^{-1}=W(L\circ\mathbf{u}), \ \ \forall\, \mathbf{u}\in\mathbb{C}^n$ and $0<t_1\leq t_2\leq \ldots \leq t_n\leq\infty$ as before in the structure theorem.  Then 
	\begin{equation}
	\label{l4st}
	S(\rho\vert\vert\sigma)=S(\rho'\vert\vert\rho(t_1)\otimes\rho(t_2)\otimes\ldots \otimes \rho(t_n) )
	\end{equation}
	where $\rho'\equiv \rho'\left(\mathbf{m}',C'\right)$ is the gaussian state characterized by  
	the annihilation mean 
	\begin{equation}
	\label{mprime}
	\mathbf{m}'=L^{-1}\circ\,(\mathbf{m}-\pmb{\ell})
	\end{equation} 
	and the covariance matrix 
	\begin{equation}
	\label{cprime}
	C'=L^T\,C\,L.
	\end{equation}
\end{lem}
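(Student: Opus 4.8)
The plan is to exploit the unitary invariance of relative entropy under conjugation by a fixed unitary operator: for any unitary $U$ and any states $\rho,\sigma$, one has $S(U^\dagger\rho U\,\vert\vert\,U^\dagger\sigma U)=S(\rho\vert\vert\sigma)$, since $U^\dagger(\log\rho)U=\log(U^\dagger\rho U)$ and the trace is cyclic. Here I would take $U=W(\pmb{\ell})\,\Gamma(L)$, the disentangling gaussian symmetry operator appearing in the standard structure (\ref{316}) of $\sigma$. Conjugating $\sigma$ by $U^\dagger$ peels off exactly the entangling part, leaving the product thermal state: $U^\dagger\sigma U=\rho(t_1)\otimes\cdots\otimes\rho(t_n)$. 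Defining $\rho'=U^\dagger\rho U$, unitary invariance then immediately gives (\ref{l4st}).

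It remains to identify $\rho'$ as the gaussian state with the mean $\mathbf{m}'$ and covariance $C'$ claimed in (\ref{mprime})–(\ref{cprime}). For this I would compute the quantum characteristic function $\widehat{\rho'}(\mathbf{u})={\rm Tr}\,W(\mathbf{u})\,U^\dagger\rho\,U={\rm Tr}\,\big(U\,W(\mathbf{u})\,U^\dagger\big)\rho$. Writing $U=W(\pmb{\ell})\Gamma(L)$ and using the intertwining relation $\Gamma(L)\,W(\mathbf{v})\,\Gamma(L)^{-1}=W(L\circ\mathbf{v})$ together with the Weyl relations (which produce only a phase when commuting $W(\pmb{\ell})$ past $W(L\circ\mathbf{v})$), one finds $U\,W(\mathbf{u})\,U^\dagger = (\text{phase})\cdot W(L\circ\mathbf{u})$ up to a displacement governed by $\pmb{\ell}$. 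Substituting the known gaussian form of $\widehat{\rho}$ (the displayed formula for $\widehat\rho(\mathbf x+i\mathbf y)$ in Section 2) and collecting terms shows $\widehat{\rho'}$ is again gaussian, with the linear (mean) part transformed by $\mathbf{m}\mapsto L^{-1}\circ(\mathbf{m}-\pmb\ell)$ and the quadratic (covariance) part by $C\mapsto L^TCL$. The appearance of $L^{-1}$ rather than $L$ in the mean, and $L^T(\cdot)L$ rather than $(L^{-1})^T(\cdot)L^{-1}$ in the covariance, is forced by the fact that we are conjugating by $U^\dagger$, i.e.\ applying $U$ inside the trace to $W(\mathbf u)$; this is consistent with the covariance transformation law $C=(L^{-1})^T D(\mathbf s)L^{-1}$ recorded in the structure theorem, applied in reverse.

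The only real bookkeeping obstacle is tracking the phase factors and the precise placement of $L$ versus $L^{-1}$ and of transposes, since $L\circ\mathbf{u}$ is defined through the real-symplectic action on $(\mathbf x,\mathbf y)^T$ and one must be careful that the bilinear form $(\mathbf x,\mathbf y)C(\mathbf x,\mathbf y)^T$ transforms correctly under $\mathbf u\mapsto L\circ\mathbf u$. I would handle this by passing everything to the real quadrature coordinates, where the Weyl operator becomes $e^{-i\sqrt2\,p(\mathbf u)}$, the symplectic action is literally matrix multiplication by $L$, and the displacement by $\pmb\ell$ is an additive shift; the gaussian integral/character computation then reduces to elementary linear algebra. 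The phases are irrelevant to the final state $\rho'$ (they cancel in $W(\mathbf u)\mapsto$ its expectation, or more precisely they are absorbed consistently), so no delicate cocycle argument is needed — only care.
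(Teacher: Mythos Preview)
Your proposal is correct and follows essentially the same line as the paper: both use unitary invariance of relative entropy with $U=W(\pmb{\ell})\Gamma(L)$ to strip $\sigma$ down to the thermal product, then identify the parameters of $\rho'=U^\dagger\rho U$. The only cosmetic difference is that the paper quotes the ready-made transformation rules $W(\mathbf{z})^{-1}\rho(\mathbf{m},C)W(\mathbf{z})=\rho(\mathbf{m}-\mathbf{z},C)$ and $\Gamma(M)^{-1}\rho(\mathbf{m},C)\Gamma(M)=\rho(M^{-1}\circ\mathbf{m},\,M^{T}CM)$ directly, whereas you propose to rederive them by computing $\widehat{\rho'}$; these are the same computation, so there is nothing further to add.
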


\begin{proof}
	For any unitary $U$ in $\Gamma(\mathbb{C}^n)$ and for $\rho'=U^{\dag}\rho\,U, \  \  \sigma'=U^{\dag}\,\sigma\,U$ we have 
	\begin{eqnarray}
	S(\rho'\vert\vert \sigma')=S(\rho\, \vert\vert \,\sigma), \ \  
	\end{eqnarray}  
	Choose $U=W(\pmb{\ell})\,\Gamma(L)$  to be the gaussian symmetry leading to the standard form (\ref{316}) of $\sigma$ i.e., 
	$$\sigma'=\left(W(\pmb{\ell})\, \Gamma(L)\right)^{-1}\, \sigma\,\, W(\pmb{\ell})\, \Gamma(L)=\rho(t_1)\otimes\rho(t_2)\otimes\ldots \otimes \rho(t_n).$$
	Noting that  
	\begin{eqnarray*}
		W(\mathbf{z})^{-1}\, \rho(\mathbf{m},\,C)\, W(\mathbf{z})&=&\rho(\mathbf{m}-\mathbf{z},\,C),   \  \ \mathbf{z}\in \mathbb{C}^n\ \ \\    
		\Gamma(M)^{-1}\, \rho(\mathbf{m},\,C)\, \Gamma(M)&=&\rho\left(M^{-1}\circ\mathbf{m},\,M^T\,C\,M\right), \ \   M\in {\rm Sp}(2n,\mathbb{R})
	\end{eqnarray*}
	we recover (\ref{mprime}), (\ref{cprime}) respectively for the annihilation mean and the covariance matrix of the transformed state $\rho'(\mathbf{m}',C')=\left(W(\pmb{\ell})\, \Gamma(L)\right)^{-1}\,\rho\,\left(W(\pmb{\ell})\, \Gamma(L)\right)$. Hence  (\ref{l4st}) follows.    
\end{proof}

Let us denote 
\begin{eqnarray*}
	\pmb{m}_{\rho}&=&L^{-1}\circ (\mathbf{m}-\pmb{\ell})=((m_\rho)_1,(m_\rho)_2,\ldots,(m_\rho)_n)^T\\
	T_k&=&{\rm\  the}\  k{\rm\!-\!th} \ 1\!-\!{\rm mode\ covariance\ matrix\ of\ order\ 2\ in}\  L^T\,C\,L.
\end{eqnarray*}
The 1-mode covariance matrix $T_k$ is made up of the $(kk)$-th, $(k,k+n)$-th, $(k+n,k)$-th, $(k+n,k+n)$-th elements of the transformed $2n\times 2n$ covariance matrix $C'=L^T\,C\,L$.  

\bigskip

Following observations are useful for computing  relative entropy $S(\rho\vert\vert \sigma)$ of $\rho$ with respect to $\sigma$.   

\bigskip 
\noindent{\bf Observation 1}: The $k$-th mode marginal of the transformed state $\rho'(\mathbf{m}',C')=\left(W(\pmb{\ell})\,\Gamma(L)\right)^{-1}\,\rho\, W(\pmb{\ell})\, \Gamma(L)$   is the 1-mode   gaussian state denoted by $\rho((m_{\rho})_k,\, T_k)$.  

\bigskip 

\noindent{\bf Observation 2}: Using Lemma~\ref{L4} we have
$S(\rho\vert\vert\sigma)=S(\rho'\vert\vert\rho(t_1)\otimes\rho(t_2)\otimes\ldots \otimes \rho(t_n) )$.  Since $\rho'$ and $\rho$ are unitarily equivalent it follows that  $S(\rho')=S(\rho)$.



\bigskip 	
\noindent {\bf Observation 3}:  By Lemma~\ref{L2} it follows that  
$$S(\rho\vert\vert \sigma)=-S(\rho)-\sum_{k}\,{\rm Tr}\, \rho((m_\rho)_k,\,T_k)\, \log\,\rho(t_k).$$
If $t_k=\infty$, then  $\rho(t_k)=\vert \Omega\,\rangle \langle \Omega\,\vert$ (vacuum state). If, in addition, 
$\rho((m_\rho)_k,\,T_k)\neq \vert \Omega\,\rangle \langle \Omega\,\vert$, then by Lemma~\ref{L1} we have 
$$-{\rm Tr}\, \rho((m_\rho)_k,\,T_k)\, \log\,\rho(t_k)=\infty.$$ 
Since  
$-{\rm Tr}\, \rho((m_\rho)_k,\,T_k)\, \log\,\rho(t_k)> 0\, \ \forall \ j,$ it follows that $S(\rho\vert\vert \sigma)=\infty$.  

\bigskip 

\noindent{\bf Observation 4}:   If $\rho(t_k)= \vert \Omega\,\rangle \langle \Omega\,\vert, \ {\rm  whenever}\  t_k=\infty,$ 
and by Lemma~\ref{L3}  
\begin{eqnarray*}
	{\rm Tr}\, \rho((m_\rho)_k,\,T_k)\, \log\,\rho(t_k)=\log\, (1-e^{-t_k})-\frac{t_k}{2}\, \left(\, {\rm Tr}\,T_k+2\, \vert (m_\rho)_k\vert^2-1 \right), \ \ {\rm if}\ t_k<\infty. 		
\end{eqnarray*}
we have  
\begin{eqnarray*}
	S(\rho\vert\vert \sigma)&=& -S(\rho) + \sum_{k:\,t_k<\infty}\, \left\{-\log\, (1-e^{-t_k}) +\frac{t_k}{2}\, \left[{\rm Tr}\, T_k +2\, \vert (m_\rho)_k\vert^2 -1   \right]     \right\}.
\end{eqnarray*}
Thus we obtain 
\begin{eqnarray}
S(\rho\vert\vert \sigma)&=&	\left\{\begin{array}{l} 
\infty, \ \  {\rm if\ } \exists\  k \ {\rm such\ that\ } t_k=\infty \, {\rm\ and\ } \rho((m_\rho)_k,T_k)\neq \vert \Omega\rangle\langle \Omega\vert, \\ 
\\
-S(\rho)+  \displaystyle\sum_{k:\,t_k<\infty}\, \left(-\log\, (1-e^{-t_k}) +\frac{t_k}{2}\, \left({\rm Tr}\, T_k +2\, \vert (m_\rho)_k\vert^2 -1   \right)\right), \ \ {\rm otherwise.} 
\end{array} \right.  		
\end{eqnarray}

Let $0<s_1\leq s_2\leq \ldots \leq s_n \leq \infty$, $t_1\leq t_2\leq \ldots \leq t_n \leq \infty$ and $S(\rho\vert\vert \sigma)<\infty$. Then 
\begin{eqnarray}
\label{re1}
S(\rho\vert\vert \sigma)&=& -S(\rho) + \sum_{k=1}^{n}\, \left\{-\log\, (1-e^{-t_k}) +\frac{t_k}{2}\, \left[{\rm Tr}\, T_k +2\, \vert (m_\rho)_k\vert^2 -1   \right] \right\}\nonumber  \\
&=& \sum_{k=1}^{n}\, \left\{ \frac{-H(e^{-s_k})}{(1-e^{-s_k)}} -\log\,(1-e^{-t_k})+\frac{t_k}{2}\, \left[{\rm Tr}\, T_k +2\, \vert (m_\rho)_k\vert^2 -1   \right] \right\}.
\end{eqnarray} 
Drop the suffix $k$ for the $k$-th term of the summation in  (\ref{re1}) for convinience of computation and express it as  
\begin{equation}
\label{star}
\frac{-H(e^{-s})}{(1-e^{-s})} -\log\,(1-e^{-t})+\frac{t}{2}\, \left({\rm Tr}\, T +2\, \vert m_{\rho}\vert^2 -1   \right).
\end{equation}
The first two terms of (\ref{star}) can be expressed as 
\begin{eqnarray}
\frac{-H(e^{-s})}{(1-e^{-s})} -\log\,(1-e^{-t})
&=& \frac{H(p_1:p_2)}{(1-e^{-s})}-\frac{t\,e^{-s}}{(1-e^{-s})}
\end{eqnarray}	
where 
\begin{eqnarray}
H(p_1:p_2)=p_1\,\log\,p_1+(1-p_1)\,\log\,(1-p_1)-p_1\,\log\,p_2-(1-p_1)\,\log\,(1-p_2), \ \ p_1=e^{-s},\ p_2=e^{-t}  
\end{eqnarray}
is the classical Shannon relative entropy of a binomial trial with probability of success $p_1$ with respect to another with probability of sucess $p_2$.   
Thus (\ref{star}) takes the form 
\begin{eqnarray}
\label{star1}
\frac{H(e^{-s}: e^{-t})}{(1-e^{-s})}+\frac{t}{2}\, \left[{\rm Tr}\, T -\left(1+\frac{2\,e^{-s}}{1-e^{-s}}  \right)+2\,  \vert m_{\rho}\vert^2    \right] 
&=& \frac{H(e^{-s}: e^{-t})}{(1-e^{-s})}+\frac{t}{2}\, \left[{\rm Tr}\, T  -\coth\left(\frac{s}{2}\right) +2\, \vert m_{\rho}\vert^2    \right].
\end{eqnarray}
Substituting (\ref{star1})  in (\ref{re1}) leads to 
\begin{eqnarray}
\label{reE}
S(\rho\vert\vert \sigma)=\displaystyle\sum_{k=1}^{n}\, \left\{ \frac{H(e^{-s_k}: e^{-t_k})}{(1-e^{-s_k})}+\frac{t_k}{2}\, \left[{\rm Tr}\, T_k  -\coth\left(\frac{s_k}{2} \right)+2\, \vert (m_\rho)_k\vert^2    \right]  \right\}
\end{eqnarray}  
whenever $S(\rho\vert\vert \sigma)<\infty.$ 
\begin{rmk}
	Relative entropy $S(\rho\vert\vert \sigma)$ of two gaussian states $\rho$ and $\sigma$ consists of a classical part  $\displaystyle\sum_{j=1}^{n}\,\frac{H(e^{-s_k}: e^{-t_k})}{(1-e^{-s_k})}$ and a quantum part $\displaystyle\sum_{k=1}^{n}\,\frac{t_k}{2}\, \left[{\rm Tr}\, T_k-\coth\left(\frac{s_k}{2}\right) +2\, \vert (m_\rho)_k\vert^2      \right].$ 
\end{rmk}

\begin{rmk}
	Note that $\coth\left(\frac{s_k}{2}\right)$ is equal to the trace of the $2\times 2$ covariance matrix of the single mode gaussian state $\rho(s_k)$. 
\end{rmk}	 

\begin{rmk}	 
	Note that 
	\begin{eqnarray*}
		e^{-s_k}&=&{\rm Success\ probability}  \\
		&=& {Pr}(\# {\rm \ particles \ }\geq 1\  {\rm \ in \ the \ thermal \  state}\ \rho(s_k)), \\    
		e^{-t_k}&=&{\rm Success\ probability}  \\
		&=&{Pr}(\# {\rm \ particles \ }\geq 1\  {\rm \ in \ the \ thermal \  state}\  \rho(t_k))
	\end{eqnarray*}
	whereas $1-e^{-s_k}$ and  $1-e^{-t_k}$ are respectively the probabilities of the  "no particle count" in the thermal states $\rho(s_k)$ and $\rho(t_k)$. 
\end{rmk}	 

\section{Petz-R{\'e}nyi relative entropy $S_\alpha(\rho\vert\vert\sigma),\ 0< \alpha<1$ of two gaussian states}

Consider the $\alpha$-dependent Petz-R{\'e}nyi relative entropy~\cite{Renyi1961,Petz1986} between two  states $\rho,\ \sigma$: 
\begin{eqnarray}
S_\alpha(\rho\vert\vert\sigma)=-\frac{1}{\alpha-1}\log{\rm Tr}\,
\rho^{\alpha}\sigma^{1-\alpha}  
\end{eqnarray}
for $0<\alpha<1$. Let $\rho$ and $\sigma$ be two $n$-mode gaussian states with their respective standard forms given by  
\begin{eqnarray}
\rho&=&W(\bm{\ell})\, \Gamma(L)\,  \rho(s_1)\otimes\rho(s_2)\otimes \ldots \otimes \rho(s_n) \, \left(W(\bm{\ell})\, \Gamma(L)\right)^{-1} \nonumber \\ 
\sigma&=&W(\mathbf{m})\, \Gamma(M)\,  \rho(t_1)\otimes\rho(t_2)\otimes \ldots \otimes \rho(t_n) \, \left(W(\mathbf{m})\, \Gamma(M)\right)^{-1}
\end{eqnarray}
where $\bm{\ell},\mathbf{m}\in\mathbb{C}^{n};$  $L, \ M \in$ Sp$(2n,\,\mathbb{R})$; $0\leq s_1 \leq s_2\leq \ldots s_n\leq \infty$; $0\leq t_1 \leq t_2\leq \ldots t_n\leq \infty$. Let 
$U=W(\mathbf{m})\, \Gamma(M)$. Then $\rho'=U^{-1}\,\rho\, U$ and $\sigma'=U^{-1}\,\sigma\, U$ are gaussian states whose standard forms are given by 
\begin{eqnarray}
\rho'&=&W(M^{-1}\circ(\bm{\ell}-\mathbf{m}))\, \Gamma(M^{-1}\,L)\,  \rho(s_1)\otimes\rho(s_2)\otimes \ldots \otimes \rho(s_n) \, \left(W(M^{-1}\circ(\bm{\ell}-\mathbf{m}))\, \Gamma(M^{-1}\,L)\,\right)^{-1} \nonumber \\ 
\sigma'&=&  \rho(t_1)\otimes\rho(t_2)\otimes \ldots \otimes \rho(t_n).
\end{eqnarray}
and 
\begin{equation}
S_{\alpha}(\rho\vert\vert \sigma)=S_\alpha(\rho'\vert\vert \sigma').
\end{equation}
Thus, in the computation of $S_{\alpha}(\rho\vert\vert \sigma)$ we may and do assume that $\rho$ and $\sigma$ have their standard forms given by 
\begin{eqnarray}
\rho&=&W(\mathbf{m}_\rho)\, \Gamma(L_\rho)\,  \rho(s_1)\otimes\rho(s_2)\otimes \ldots \otimes \rho(s_n) \, \left(W(\mathbf{m}_\rho)\, \Gamma(L_\rho)\,\right)^{-1} \nonumber  \\ 
\sigma&=&  \rho(t_1)\otimes\rho(t_2)\otimes \ldots \otimes \rho(t_n).
\end{eqnarray}
where the covariance matrix 
\begin{eqnarray}
\label{d0s}
C_\rho&=& L^T_\rho D(\mathbf{s})\,L_\rho,\   \  \   L_\rho\in {\rm Sp}(2n,\,\mathbb{R}) \nonumber \\ 
D(\mathbf{s})&=&\left(\begin{array}{cc} D_0({\mathbf{s}}) &  0_n \\ 
0_n & D_0({\mathbf{s}})  \end{array}   \right) \nonumber \\  D_0({\mathbf{s}})&=&{\rm diag}\left[\frac{1}{2}\coth\left(\frac{s_k}{2}\right), k=1,2,\ldots n\ \right] 
\end{eqnarray}
and 
\begin{eqnarray}
C_\sigma&=&  D(\mathbf{t})=\left(\begin{array}{cc} D_0({\mathbf{t}}) &  0_n \\ 
0_n & D_0({\mathbf{t}})  \end{array}   \right), \nonumber \\  D_0({\mathbf{s}})&=&{\rm diag}\left[\frac{1}{2}\coth\left(\frac{t_k}{2}\right), k=1,2,\ldots n\ \right]  
\end{eqnarray}
We denote 
\begin{eqnarray}
\rho&=&\rho(\mathbf{m}_\rho,\, L^T_\rho\, D\left(\,\mathbf{s})\, L_\rho\right) \nonumber  \\ 
\sigma&=& \rho(\mathbf{0},\,  D(\mathbf{t})).
\end{eqnarray}

Let us assume that $s_k<\infty$ and $t_k\leq\infty$ for all $k=1,2,\ldots, n$.  For $0<\alpha<1$ we have 
\begin{eqnarray}
\rho^\alpha=\frac{p^\alpha(\mathbf{s})}{p(\alpha\,\mathbf{s})}\,\rho(\mathbf{m}_{\rho},\, L^T_\rho\, D\left(\alpha\,\mathbf{s})\, L_\rho\right)  \\
\sigma^{1-\alpha}=\frac{p^{1-\alpha}(\mathbf{t})}{p((1-\alpha)\,\mathbf{t})}\,\rho(\mathbf{0},\,  D((1-\alpha)\,\mathbf{t}))  
 \end{eqnarray}
where  $p(\mathbf{r})=\displaystyle\prod_{k=1}^{n}\,(1-e^{-r_k})$. Note that  $\rho^\alpha$, $\sigma^{1-\alpha}$ are also  gaussian states   
$\rho\left(\mathbf{m}_{\rho},\, L^T_\rho\, D\left(\alpha\,\mathbf{s}\right)\, L_\rho\right),$ $\rho(\mathbf{0}~,~D((1~-~\alpha)\,\mathbf{t}))$  up to multiplication by scalar factors  $\frac{p^\alpha(\mathbf{s})}{p(\alpha\,\mathbf{s})}$, $\frac{p^{1-\alpha}(\mathbf{t})}{p((1-\alpha)\,\mathbf{t})}$ respectively. 

By Wigner's theorem~\cite{Tiju-Par-2019,Tiju-Par-2021} and gaussian integral formula we obtain   
\begin{eqnarray}
&&{\rm Tr}\rho^\alpha\,\sigma^{1-\alpha} = \frac{1}{\pi^n}
\int_{\mathbb{R}^{2n}}\, d\mathbf{x}\,d\mathbf{y}\ \widehat{(\rho^\alpha)}(\mathbf{x}+i\mathbf{y})
\, \widehat{(\sigma)^{1-\alpha}}(\mathbf{x}+i\mathbf{y})  \nonumber \\
&&\ \ \ = \left(\frac{p^\alpha(\mathbf{s})}{p(\alpha\,\mathbf{s})}\right)\,\left( \frac{p^{1-\alpha}(\mathbf{t})}{p((1-\alpha)\,\mathbf{t})}\right) 
\left(\frac{ {\rm exp}\left[-\widetilde{\mathbf{m}}_{\rho}^T\,\left( L^T_\rho\,D(\alpha\,\mathbf{s})\,L_\rho+ D((1-\alpha)\,\mathbf{t})\right)^{-1}\,\widetilde{\mathbf{m}}_{\rho}\right]}{ {\sqrt{\det\left[L^T_\rho\,D(\alpha\mathbf{s})\,L_\rho +D((1-\alpha)\mathbf{t})\right]}}}\right)
\end{eqnarray}
where $\widetilde{\mathbf{m}}_{\rho}^T=(-\,{\rm Im}\,\mathbf{m}_{\rho},\ {\rm Re}\,\mathbf{m}_{\rho})^T$.  We express the Petz-R{\'e}nyi relative entropy $S_\alpha(\rho\vert\vert\sigma),\ 0<\alpha<1$ as
\begin{eqnarray}
\label{prE}
S_\alpha(\rho\vert\vert\sigma) &=& {\mathcal R}^{I}_\alpha + {\mathcal R}^{II}_\alpha +{\mathcal R}^{III}_\alpha + {\mathcal R}^{IV}_\alpha  \\ 
\label{prE1}
{\mathcal R}^{I}_\alpha&=& -\frac{1}{1-\alpha}\,\sum_{k=1}^n\, \left[\alpha\,\log\left(1-e^{-s_k}\right)-\log\left(1-e^{-\alpha\,s_k}\right)\right]  \\ 
\label{prE2}
{\mathcal R}^{II}_\alpha&=& \sum_{k=1}^n\, \left[-\log\left(1-e^{-t_k}\right) +\frac{1}{1-\alpha}\,\log\left(1-e^{-t_k(1-\alpha)}\right)\right]  \\  
\label{prE3}
{\mathcal R}^{III}_\alpha&=& \frac{1}{1-\alpha}\, \left[\widetilde{\mathbf{m}}_{\rho}^T\,  
\left\{L^T_\rho\, D\left(\alpha\,\mathbf{s}\right)\, L_\rho + D((1-\alpha)\,\mathbf{t})\right\}^{-1}\,\widetilde{\mathbf{m}}_{\rho}\right]  \\
\label{prE4}
{\mathcal R}^{IV}_\alpha&=& \frac{1}{2(1-\alpha)}\,\log\, \det\left[L^T_\rho\, D\left(\alpha\,\mathbf{s}\right)\, L_\rho +D((1-\alpha)\mathbf{t})\right] 
\end{eqnarray}
The  Petz-R{\'e}nyi relative entropy $S_\alpha(\rho\vert\vert\sigma)$   given by equations (\ref{prE})-(\ref{prE4}) converges to $S(\rho\vert\vert\sigma)$  in the limit $\alpha\rightarrow 1$. 

By L'Hospital rule we obtain 
\begin{eqnarray}
\label{r1fin}
\lim_{\alpha\rightarrow 1}\,R_\alpha^{I}&=& \sum_{k=1}^n\,\lim_{\alpha\rightarrow 1}  \left[-\frac{\alpha\, \log\left(1-e^{-s_k}\right)}{1-\alpha}\,\,+\frac{\log\left(1-e^{-\alpha\,s_k}\right)}{1-\alpha}\right]   \nonumber \\ 
&=& \sum_{k=1}^{n}\,\left[\log\left(1-e^{-s_k}\right)-\frac{s_k\, e^{-s_k}}{1-e^{-s_k}}\right]\nonumber \\ 
                              &=&-\sum_{k=1}^{n}\,\frac{H(e^{-s_k})}{1-e^{-s_k}}
\end{eqnarray}  
and 
\begin{eqnarray}
\label{r2fin}
\lim_{\alpha\rightarrow 1}\,R_\alpha^{II}&=&\sum_k\,\left[-\log\left(1-e^{-t_k}\right) +\lim_{\alpha\rightarrow 1}\, \frac{1}{1-\alpha}\,\log\left(1-e^{-t_k(1-\alpha)}\right)\right] \nonumber \\ 
&=& \sum_k\,\left[-\log\left(1-e^{-t_k}\right)+  \lim_{\alpha\rightarrow 1}\,\frac{ t_k\, e^{-t_k(1-\alpha)} }{(1-e^{-t_k(1-\alpha)})}\right] \\
&=& \sum_k\,\left[-\log\left(1-e^{-t_k}\right)+  \lim_{\alpha\rightarrow 1}\,\frac{ t^2_k\, e^{-t_k(1-\alpha)} }{(-t_k\,e^{-t_k(1-\alpha)})}\right] \\
&=& \sum_k\,\left[-\log\left(1-e^{-t_k}\right) - t_k\right]. 
\end{eqnarray}  
In order to compute  $\displaystyle\lim_{\alpha\rightarrow 1}\,R_{\alpha}^{III}$ we take the factor $\frac{1}{1-\alpha}$ inside the bracket $\{\cdots\}^{-1}$ in (\ref{prE3}) to obtain 
\begin{eqnarray}
\label{r3}
\lim_{\alpha\rightarrow 1}\,R_\alpha^{III}&=&\left[\widetilde{\mathbf{m}}_{\rho}^T\,  
\lim_{\alpha\rightarrow 1}\left\{(1-\alpha)\, L^T_\rho\, D\left(\alpha\,\mathbf{s}\right)\, L_\rho +(1-\alpha) D((1-\alpha)\,\mathbf{t})\right\}^{-1}\,\widetilde{\mathbf{m}}_{\rho}\right].   
\end{eqnarray}  
Note that any entry in the $2n\times 2n$ matrix $D(\alpha\,\mathbf{s})$ has the form  (see (\ref{d0s}))  $\frac{1}{2}\,  \coth\left(\frac{\alpha\,s_k}{2}\right)$, which converges to  $\frac{1}{2}\,  \coth\left(\frac{s_k}{2}\right)$ as $\alpha$ increases to 1. Thus, the first term inside the bracket $\{\cdots\}^{-1}$ in (\ref{r3}) is zero. 
Any entry  in the second term  $(1-\alpha)\,D((1-\alpha)\,\mathbf{t})$  is of the form 
$$ \left(\frac{1-\alpha}{2}\right)\,  \coth\left(\frac{(1-\alpha)\,t_k}{2}\right)=\left(\frac{1-\alpha}{2}\right)\, \frac{1+e^{t_k(1-\alpha)}}{1-e^{t_k(1-\alpha)}}.$$ 
Expanding $e^{t_k(1-\alpha)}$ in the denominator and applying L'Hospital rule we get 	 
$$\lim_{\alpha\rightarrow 1}\left(\frac{1-\alpha}{2}\right)\, \frac{1+e^{-t_k(1-\alpha)}}{1-e^{-t_k(1-\alpha)}}=\frac{1}{t_k}.$$   
Thus,
\begin{eqnarray}
\label{r3fin}
\lim_{\alpha\rightarrow 1}\,R_\alpha^{III}&=&\widetilde{\mathbf{m}}_{\rho}^T\, {\rm diag}\left(t_1,\,t_2\,\ldots , t_n; t_1,\,t_2\,\ldots , t_n \right)\, \widetilde{\mathbf{m}}_{\rho}\nonumber \\
    &=& \sum_{k=1}^{n}\, t_{k}\, \vert (m_\rho)_k\,\vert^2
\end{eqnarray}  
To determine the limit $\alpha\rightarrow 1$ of the term ${\mathcal R}^{IV}_\alpha$ (see (\ref{prE4})) we pull out the diagonal $2n\times 2n$ matrix $D((1-\alpha)\mathbf{t})$ from the determinant to obtain  
\begin{eqnarray}
\label{r4d}
R_\alpha^{IV}&=&\frac{1}{2(1-\alpha)}\,\left\{\log\, \det (D((1-\alpha)\mathbf{t}))   +\log\,\det \left(I_{2n}+B_\alpha\,L^T_\rho\, D\left(\alpha\,\mathbf{s}\right)\, L_\rho\, B_\alpha \right)\right\} 
\end{eqnarray}  
where 
\begin{equation}
\label{balpha}
B_\alpha=D^{-\frac{1}{2}}((1-\alpha)\mathbf{t}).
\end{equation}
 Substituting  $\det (D((1-\alpha)\mathbf{t}))=  2\,\displaystyle\sum_{k=1}^n\log \left[\frac{1}{2}\, \coth\left(\frac{(1-\alpha)\,t_k}{2}\right)\right]$ in (\ref{r4d}) and simplifying,  we obtain 
\begin{eqnarray}
\label{r41}
R_\alpha^{IV}&=&\frac{1}{2(1-\alpha)}\,\left\{\sum_{k=1}^n\, 2\, \left[\log\,\left(\frac{1+e^{-t_k(1-\alpha)}}{2}\right)-\log\left(1-e^{-t_k(1-\alpha)}\right)\right]    +\log\,\det \left(I_{2n}+B_\alpha\,L^T_\rho\, D\left(\alpha\,\mathbf{s}\right)\, L_\rho\, B_\alpha \right)\right\}.\nonumber \\ 
\end{eqnarray} 
Applying  L'Hospital rule we obtain  
\begin{eqnarray}
\label{r4t1}
\lim_{\alpha\rightarrow 1}\,\frac{1}{(1-\alpha)}\,\log\,\left(\frac{1+e^{-t_k(1-\alpha)}}{2}\right)&=&\lim_{\alpha\rightarrow 1}\, \frac{-\, t_k\,e^{-t_k(1-\alpha)}}{(1+e^{-t_k(1-\alpha)})}  
=-\frac{t_k}{2}  \\
\label{r4t2}
\lim_{\alpha\rightarrow 1}\,\frac{-1}{(1-\alpha)}\,\log\left(1-e^{-t_k(1-\alpha)}\right)&=& t_k.
\end{eqnarray}
To facilitate the computation of   $\displaystyle\lim_{\alpha\rightarrow 1}\,\frac{1}{2(1-\alpha)}\, \log\,\det \left(I_{2n}+B_\alpha\,L^T_\rho\, D\left(\alpha\,\mathbf{s}\right)\, L_\rho\, B_\alpha \right)$ we present the following Lemma. 
 \begin{lem} 
	\label{L5}
	Let $0<\theta<1$ be a parameter and $F_\theta$ be positive matrices of finite order such that 
	\begin{equation}
	\lim_{\theta\rightarrow 0}\, F_{\theta}=F.  
	\end{equation}   
	Then 
	\begin{equation}
	\label{l5r}
	\lim_{\theta\rightarrow 0}\,\frac{1}{\theta}\,\log\,\det\left(I +\theta\, F_{\theta}\right)={\rm Tr}\,F.  
	\end{equation}   	
\end{lem}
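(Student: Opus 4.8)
The plan is to prove the determinant identity by reducing it to a trace limit via the standard expansion $\log\det(I+X) = \operatorname{Tr}\log(I+X)$, valid for matrices $X$ of small norm. First I would observe that since $F_\theta \to F$ as $\theta \to 0$, the family $\{F_\theta\}$ is uniformly bounded in operator norm on some interval $(0,\theta_0)$, say $\|F_\theta\| \le M$. Hence $\|\theta F_\theta\| \le \theta M < 1$ for $\theta$ small enough, so the matrix logarithm series $\log(I+\theta F_\theta) = \theta F_\theta - \tfrac{1}{2}(\theta F_\theta)^2 + \tfrac{1}{3}(\theta F_\theta)^3 - \cdots$ converges, and we may take traces term by term:
\begin{equation}
\frac{1}{\theta}\log\det(I + \theta F_\theta) = \frac{1}{\theta}\operatorname{Tr}\log(I + \theta F_\theta) = \operatorname{Tr} F_\theta - \frac{\theta}{2}\operatorname{Tr}(F_\theta^2) + \frac{\theta^2}{3}\operatorname{Tr}(F_\theta^3) - \cdots.
\end{equation}

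Next I would bound the tail. For $\|\theta F_\theta\| \le \theta M \le 1/2$, the remainder $\sum_{j\ge 2}\frac{(-1)^{j+1}}{j}\theta^{j-1}\operatorname{Tr}(F_\theta^j)$ is bounded in absolute value by $d \sum_{j\ge 2} \theta^{j-1} M^j = d M^2 \theta /(1-\theta M) \to 0$ as $\theta \to 0$, where $d$ is the (fixed, finite) order of the matrices. Meanwhile $\operatorname{Tr} F_\theta \to \operatorname{Tr} F$ by continuity of the trace and the hypothesis $F_\theta \to F$. Combining these two facts gives exactly $\lim_{\theta\to 0}\frac{1}{\theta}\log\det(I+\theta F_\theta) = \operatorname{Tr} F$, which is (\ref{l5r}).

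An alternative, perhaps cleaner, route avoids the logarithm series: write $\phi(\theta) = \log\det(I+\theta F_\theta)$ and note $\phi(0) = 0$, so that $\frac{1}{\theta}\phi(\theta) = \frac{\phi(\theta) - \phi(0)}{\theta}$ is a difference quotient; if one first handles the case of constant $F_\theta \equiv F$ by the elementary derivative formula $\frac{d}{d\theta}\log\det(I+\theta F)\big|_{\theta=0} = \operatorname{Tr} F$, then a perturbation estimate controls the error introduced by letting $F_\theta$ vary. I would present the series argument as the main proof since it is self-contained and the tail bound is transparent.

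The only mild obstacle is making sure the uniform bound $\|F_\theta\| \le M$ is legitimately available: this follows because a convergent family is bounded, but one should state it explicitly rather than leave it implicit, and one should note that all norms on the finite-dimensional space of $d\times d$ matrices are equivalent so the choice of norm (operator norm for the series convergence, any norm for boundedness of traces) is immaterial. Everything else is routine estimation, so no serious difficulty is anticipated.
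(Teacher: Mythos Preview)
Your proof is correct and follows essentially the same route as the paper: both expand the logarithm to first order in $\theta$ (the paper via the eigendecomposition $\det(I+\theta F_\theta)=\prod_j(1+\theta\lambda_j(\theta))$ and the scalar expansion $\log(1+\theta\lambda_j(\theta))=\theta\lambda_j(\theta)+O(\theta^2)$, you via the equivalent matrix identity $\log\det=\operatorname{Tr}\log$ and the operator series), identify the leading term as $\operatorname{Tr}F_\theta\to\operatorname{Tr}F$, and discard the remainder. Your version is in fact more careful than the paper's, since you make the uniform bound $\|F_\theta\|\le M$ explicit and give a quantitative tail estimate, whereas the paper simply asserts that the eigenvalues ``are expected to remain positive and bounded''.
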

\begin{proof}
The postitive matrices  $F_{\theta}$ admit eigen decomposition 
\begin{equation}
F_{\theta}=\sum_{j}\, \lambda_j(\theta)\, \vert \psi_j(\theta) \rangle \langle  \psi_j(\theta) \vert 
\end{equation}  
using which we express 
	\begin{eqnarray}
\frac{1}{\theta}\,\log\,\det\left(I +\theta\, F_{\theta}\right)&=&   \frac{1}{\theta}\,\log\,\prod_j\,\left(1 +\theta\,\lambda_j(\theta)\right) \\ 
&=& \sum_{j}\,  \frac{1}{\theta}\,\log\,\left(1 +\theta\,\lambda_j(\theta)\right).
\end{eqnarray}   	
As $F_\theta \rightarrow F$ in the limit $\theta\rightarrow 0$, the eigenvalues $\lambda_j(\theta)$ are expected to remain positive and bounded. Expanding $\log(1+\theta\,\lambda_j(\theta))$ in the small $\theta$ limit, up to second order in $\theta$ i.e., 
\begin{equation}
\log\,\left(1 +\theta\,\lambda_j(\theta)\right) \approx \theta\,\lambda_j(\theta) + {\mathcal O}(\theta^2)
\end{equation}
we get 
\begin{eqnarray}
\lim_{\theta\rightarrow 0}\, \sum_{j}\, \frac{1}{\theta}\,\log\,\left(1 +\theta\,\lambda_j(\theta)\right)& =&\lim_{\theta\rightarrow 0} \sum_j \lambda_j(\theta) \nonumber \\ 
&=& {\rm Tr}\,F. 
\end{eqnarray}
\end{proof}
We apply Lemma~\ref{L5} by substituting $\theta=(1-\alpha)$ and $F_{\theta}= \frac{1}{\theta}\,B_{\theta}\,L^T_\rho\, D\left((1-\theta)\,\mathbf{s}\right)\, L_\rho\, B_\theta,$ \  
$B_\theta=D^{-\frac{1}{2}}(\theta\,\mathbf{t})$ to obtain 
\begin{eqnarray}
\label{limr4}
\lim_{\alpha\rightarrow 1}\,\frac{1}{2(1-\alpha)}\, \log\,\det \left(I_{2n}+B_\alpha\,L^T_\rho\, D\left(\alpha\,\mathbf{s}\right)\, L_\rho\, B_\alpha \right)&=&\lim_{\alpha\rightarrow 1}\,\frac{1}{2(1-\alpha)}\,{\rm Tr}\, D^{-1}((1-\alpha)\,\mathbf{t}) \,L^T_\rho\, D\left(\alpha\,\mathbf{s}\right)\, L_\rho.
\end{eqnarray}
The  $2n\times 2n$ diagonal matrix $D((1-\alpha)\,\mathbf{t})$ has entries of the form 
$\frac{1}{2}\,  \coth\left(\frac{1-\alpha}{2}\,t_k\right)$ and we know that 
\begin{eqnarray*}
\lim_{\alpha\rightarrow 1}\,\left(\frac{1-\alpha}{2}\right)\,  \coth\left(\frac{1-\alpha}{2}\,t_k\right)= \lim_{\alpha\rightarrow 1}\,\left(\frac{1-\alpha}{2}\right)\, \frac{1+e^{-t_k(1-\alpha)}}{1-e^{-t_k(1-\alpha)}} =\frac{1}{t_k}.
\end{eqnarray*}
Thus, 
\begin{equation}
\label{dmlim}
\lim_{\alpha\rightarrow 1}\,\frac{1}{(1-\alpha)}\,  D^{-1}((1-\alpha)\,\mathbf{t})= {\rm diag}\,\left(t_1,t_2,\ldots, t_n; t_1,t_2,\ldots, t_n\right)
	\end{equation}
	
Substituting (\ref{dmlim}) in (\ref{limr4}) we get  
\begin{eqnarray}
\label{r4pfin}
\lim_{\alpha\rightarrow 1}\,\frac{1}{2(1-\alpha)}\,   {\rm Tr}\, D^{-1}((1-\alpha)\,\mathbf{t}) \, L^T_\rho\, D(\alpha\,\mathbf{s})\, L_\rho  &=& \frac{1}{2}
 {\rm Tr}\,   {\rm diag}\,\left(t_1,t_2,\ldots, t_n; t_1,t_2,\ldots, t_n\right) C_{\rho} \nonumber \\
 &=& \sum_{k=1}^n  t_k\, {\rm Tr}\,  T_k 
\end{eqnarray}
where $C_{\rho}=L^\rho\, D\left(\mathbf{s}\right)\, L^T_\rho$ is the covariance matrix of the $n$-mode gaussian state $\rho$ and $T_k$ is the $2\times 2$  single mode covariance matrix of the $k$-th mode marginal system of $\rho$. Thus, 
\begin{eqnarray*}
	\label{r4fin}
	\lim_{\alpha\rightarrow 1}\,R_\alpha^{IV}&=&\sum_{k=1}^n\,\frac{t_k}{2} \left({1+\rm Tr}\,  T_k\right)\, 
	\end{eqnarray*}

Putting together  (\ref{r1fin}), (\ref{r2fin}), (\ref{r3fin}) and (\ref{r4fin}) we obtain  
\begin{eqnarray*}
	\lim_{\alpha\rightarrow 1}\,S_\alpha(\rho\vert\vert\sigma)&=& \sum_{k=1}^{n}\, \left\{ \frac{-H(e^{-s_k})}{(1-e^{-s_k)}} -\log\,(1-e^{-t_k})+\frac{t_k}{2}\, \left[{\rm Tr}\, T_k +2\, \vert (m_\rho)_k\vert^2 -1   \right] \right\}\\
	&=& S(\rho\vert\vert\sigma) 
\end{eqnarray*}
  as expected.

For alternate approaches on the computation of relative entropy between two gaussian states see References \cite{Scheel-2001, Chen-2005, Pir-2017,Seshadreesan-Lami-Wilde-2018}.

\section*{Acknowledgement} 
My thanks to A R Usha Devi for making a readable manuscript out of my scribbled algebra.  I thank  Mark M Wilde  for pointing out a misprint in equation (5.8) of the first version
of our manuscript and  Ajit Iqbal Singh for going through the manuscript meticulously.

\providecommand{\MR}{\relax\ifhmode\unskip\space\fi MR }
\providecommand{\MRhref}[2]{%
	\href{http://www.ams.org/mathscinet-getitem?mr=#1}{#2}
}
\providecommand{\href}[2]{#2}

\end{document}